\newtheorem{theorem}{Theorem}
\newtheorem{proposition}[theorem]{Proposition}
\newenvironment{proof}[1][Proof]{\noindent\textbf{#1.} }{\ \rule{0.5em}{0.5em}}
\begin{document}

\title{An Application of the Deutsch-Josza Algorithm to Formal Languages and the Word Problem in Groups}
\author{Michael Batty \and Andrea Casaccino\thanks{%
Corresponding author: Information Engineering Department, University of Siena, Italy; ndr981@tin.it} \and Andrew J. Duncan\thanks{%
Department of Mathematics, University of Newcastle upon Tyne, UK; a.duncan@ncl.ac.uk} \and Sarah Rees\thanks{%
Department of Mathematics, University of Newcastle upon Tyne, UK; Sarah.Rees@ncl.ac.uk} \and Simone Severini\thanks{%
Institute for Quantum Computing and Department of Combinatorics and Optimization, University
of Waterloo, Canada; simoseve@gmail.com}}
\maketitle

\begin{abstract}
We adapt the Deutsch-Josza algorithm to the context of formal
language theory. Specifically, we use the algorithm to
distinguish between trivial and nontrivial words in groups given by
finite presentations, under the promise that a word is of a certain
type. This is done by extending the original algorithm to functions
of arbitrary length binary output and with the introduction of a
more general concept of parity. We provide examples in which properties inherited directly from the original algorithm
allow to reduce the number of oracle queries with respect to
the deterministic classical case. This has some consequences for
the word problem in groups with a particular kind of presentation.
\end{abstract}

\section{The Deutsch-Josza algorithm adapted to formal languages}

We apply a direct generalization of the Deutsch-Josza
algorithm to the context of formal language theory. More
particularly, we adapt the algorithm to distinguish between trivial
and nontrivial words in groups given by finite presentations, under
the promise that a word is of a certain type. For background
information, we refer the reader to \cite{l} and \cite{2}.

The Deutsch-Josza algorithm concerns maps $f:\{0,1\}^{n}\longrightarrow
\{0,1\}$, which we may think of as words of length $n$ in a
two-letter alphabet. Instead, let us consider maps $f:\{0,1\}^{n}\longrightarrow
\{0,1\}^{k}$, where $k$ does not necessarily depend on $n$. Once fixed $k=2$, we can
identify the letters of the alphabet $\mathcal{A}=\{a,b,c,d\}$ with the binary strings of $\{0,1\}^{2}$: $a\leftrightarrow 00$, $b\leftrightarrow 01$, $%
c\leftrightarrow 10$ and $d\leftrightarrow 11$.

 We describe below the simplest possible case: the map $f$ takes a single binary digit as input and gives two binary digits as output. The output corresponds to one of the letters from $\mathcal{A}$. Although this example is not general enough to be interesting, it is still useful to see how the \textquotedblleft balanced VS. constant\textquotedblright question in the original Deutsch-Jozsa task can be lifted to different parities related to the function. This is essentially the same identical quantum circuit implementing the Deutsch algorithm, but with auxiliary input $|11\rangle $ rather than $|1\rangle $. For the sake of clarity, let us see the steps of the algorithm. After applying the Hadamard gates to the two registers, the state of the system is
\begin{equation*}
H\otimes H^{\otimes 2}(|0\rangle \otimes |11\rangle )=|+\rangle \otimes |-\rangle^{\otimes 2}.
\end{equation*}%
If $z\in \{0,1\}$, the oracle works as follows:
\[
\begin{tabular}{ll}
$U_{f}\left( |z\rangle \otimes |-\rangle ^{\otimes 2}\right) $ & $=|z\rangle
\otimes \frac{1}{2}\left( |00\oplus f(z)\rangle -|01\oplus f(z)\rangle
-|10\oplus f(z)\rangle +|11\oplus f(z)\rangle \right) $ \\
& $=(-1)^{p(f(z))}|z\rangle \otimes \left( -|-\rangle \right) ^{\otimes 2}.$%
\end{tabular}%
\]
For a binary string $y$, we denote by $p(y)$ the \emph{parity} of
$y$, that is $p(y)=m(mod2)$, where $m$ is the Hamming weight of $y$.
After querying the oracle $U_{f}$, we obtain the state
\begin{equation*}
\frac{(-1)^{p(f(0))}|0\rangle +(-1)^{p(f(1))}|1\rangle
}{\sqrt{2}}\otimes |-\rangle ^{\otimes 2}.
\end{equation*}%
Finally, after the last Hadamard gate, the first qubit will be in the
state $|0\rangle$ if $p(f(0))=p(f(1))$ or $|1\rangle$ if
$p(f(0))\neq p(f(1))$. We shall say that $f$ is \emph{parity
constant} if $p(f(0))=p(f(1))$; \emph{parity balanced},
otherwise. By measuring the final state, we
obtain $|0\rangle $ with probability $1$ if $f$ is parity balanced and $%
|1\rangle $ with probability $1$ if $f$ is parity constant. In the same spirit, moving to a larger number of bits, a function $f$ is parity balanced if exactly half of the elements of the image of $f$ have odd parity. We will show that properties of the Deutsch-Jozsa algorithm are inherited when extending the co-domain of $f$ and generalizing the notion of parity in less trivial ways.

\bigskip

Let us now introduce some terminology related to formal languages. Given a word $%
w:\{0,1\}^{n}\longrightarrow \{a,b,c,d\}$, an \emph{anagram} of $w$ is a
word of the form $w\circ \phi $, where $\phi :\{0,1\}^{n}\longrightarrow
\{0,1\}^{n}$ is a permutation. We write $[w]$ for the set of all anagrams of
$w$. More formally, let $F$ denote the free monoid on $\{a,b,c,d\}$ and let $%
M$ denote the free commutative monoid on $\{a,b,c,d\}$. Let $R$
denote the natural map from $F$ to $M$ and suppose that $w\in M$.
Then $R(w)=[w]$, the set of all anagrams of $w$. It is clear that
the definition of parity balanced and parity constant
extends to the words of $M$. Let $x\in \{01,10,11\}$, we denote the sets of $x$\emph{-constant} and $x$\emph{-balanced} words of
length $k$ over $\mathcal{A}$ by $\mathcal{C}_{k}^{x}(\mathcal{A})$ and $\mathcal{B}_{k}^{x}(\mathcal{A})$, respectively. The set of $11$-constant words is
then a union of sets of anagrams $$\mathcal{C}_{2}^{11}(a,b,c,d)=[aa]\cup \lbrack bb]\cup \lbrack
cc]\cup \lbrack dd]\cup \lbrack bc]\cup \lbrack ad].$$
Similarly, the set of $11$-balanced words is
$$\mathcal{B}_{2}^{11}(a,b,c,d)=[ab]\cup \lbrack ac]\cup \lbrack
bd]\cup \lbrack cd].$$ Note that both the terms of the alphabet in the bracket have the same parity
with the notation $a\leftrightarrow 00$, $b\leftrightarrow 01$, $%
c\leftrightarrow 10$ and $d\leftrightarrow 11$.

Suppose not to input $|11\rangle $ into the auxiliary
workspace, but rather some arbitrary string of length two. How does this
affect the sets of words we can distinguish between? It is
interesting to observe that we may define as follows a more general type of
parity. The set $\{00,01,10,11\}$ is
considered in natural way as the vector space $(\mathbb{Z}_{2})^{2}=\mathbb{Z}%
_{2}\oplus \mathbb{Z}_{2}$. Define $p^{x}(y)$ to be equal to $0$,
if $y$ is in the subspace $\langle x\rangle =\{00,x\}$ and equal
to $1$, otherwise. With this notation, $p^{11}(y)=p(y)$, the usual
parity function. A similar circuit, taking the auxiliary input $\neg(x)$, that is the binary complement of $x$,
will distinguish between whether the word is $x$%
-constant or $x$-balanced. Again, measurement of the state will yield this
information with certainty. It is clear that if $x=00$ then the
output of the circuit is
independent of $f$, and so this is of no use. Let us now suppose that $x=01$. Then $x$%
-constant means that the outputs of $f$ are in the same coset of the subgroup
$\{00,x=01\}$ in $(\mathbb{Z}_{2})^{2}$ and $x$-balanced means that
$f(0)$ and $f(1)$ are in different cosets, or, in other words, both
in or out the subspace $\langle x\rangle =\{00,x\}$. The set of
$01$-constant words is
$$\mathcal{C}_{2}^{01}(a,b,c,d)=[aa]\cup
\lbrack bb]\cup \lbrack cc]\cup \lbrack dd]\cup \lbrack ab]\cup
\lbrack cd]$$
and the set of $01$-balanced words is
$$\mathcal{B}_{2}^{01}(a,b,c,d)=[ac]\cup \lbrack ad]\cup \lbrack
bc]\cup \lbrack bd].$$
With the same notation,
$$\mathcal{C}_{2}^{10}(a,b,c,d)=[aa]\cup \lbrack bb]\cup \lbrack
cc]\cup \lbrack dd]\cup \lbrack ac]\cup \lbrack bd]$$
and
$$\mathcal{B}_{2}^{10}(a,b,c,d)=[ab]\cup \lbrack ad]\cup \lbrack
bc]\cup \lbrack cd].$$
As before, the first term and the second term in the bracket
represent the first output and the second output of the function, respectively.
Also the parity is the same as described before. Note that when the set is parity
constant both terms are in or out the subspace $\langle x\rangle$,
while in the parity balanced case one term is in the subspace and
the other one is out.

We write $$\mathcal{F}%
_{k}^{x}(\mathcal{A})=\mathcal{C}_{k}^{x}(\mathcal{A})\cup \mathcal{B}%
_{k}^{x}(\mathcal{A})$$ and call this the set of $x$-\emph{feasible} words of
length $k$. Notice that $$\mathcal{A}^{k}=\bigcup\nolimits_{x} \mathcal{F}%
_{k}^{x}(\mathcal{A}).$$

The following fact is central in the context of our discussion.

\begin{theorem}
Fixed an $x$-parity, we can decide if a function $f$ is $x$-constant or $x$-balanced with a single quantum query. Equivalently, we can determine if the output of $f$ is a language in $\mathcal{C}_{k}^{x}$ or $\mathcal{B}_{k}^{x}$, with a single quantum query.
\end{theorem}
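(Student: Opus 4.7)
The plan is to mimic the Deutsch--Josza circuit, but with $n$ input qubits and a two-qubit auxiliary register initialized to a state $|s_x\rangle$ tailored to $x$. Specifically, I would prepare $|0\rangle^{\otimes n}\otimes|s_x\rangle$, apply Hadamards to every qubit, query the oracle $U_f:|z\rangle|y\rangle\mapsto|z\rangle|y\oplus f(z)\rangle$ once, apply Hadamards to the input register, and measure.

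The first step is to identify the right auxiliary string $s_x$. Since $\langle x\rangle=\{00,x\}$ has index two in $(\mathbb{Z}_2)^2$, the indicator $p^x$ coincides with the unique nonzero linear functional vanishing on $\langle x\rangle$; explicitly, $p^x(y)=s_x\cdot y$ with $s_x=10,01,11$ for $x=01,10,11$ respectively. Note that $s_{11}=11$, recovering the auxiliary $|11\rangle$ already used for the ordinary parity case in the excerpt. The point of this choice is that $H^{\otimes 2}|s_x\rangle=\tfrac{1}{2}\sum_y(-1)^{s_x\cdot y}|y\rangle$ is a phase-kickback eigenstate of the oracle.

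A short computation, re-indexing the sum over $y$ by $y'=y\oplus f(z)$, then shows that the state after the oracle is $\tfrac{1}{\sqrt{2^n}}\sum_z(-1)^{s_x\cdot f(z)}|z\rangle\otimes H^{\otimes 2}|s_x\rangle=\tfrac{1}{\sqrt{2^n}}\sum_z(-1)^{p^x(f(z))}|z\rangle\otimes H^{\otimes 2}|s_x\rangle$. Applying Hadamards once more to the input register, the amplitude of $|0\rangle^{\otimes n}$ is $2^{-n}\sum_z(-1)^{p^x(f(z))}$. If $f$ is $x$-constant this sum equals $\pm 1$, so $|0\rangle^{\otimes n}$ is measured with certainty; if $f$ is $x$-balanced the signs cancel and the outcome is a non-zero string. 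A single query therefore decides between the two cases.

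The only real hurdle is the linearity claim for $p^x$; everything else is routine phase-kickback bookkeeping. For the two-bit auxiliary this is immediate because every index-two subspace of $(\mathbb{Z}_2)^2$ is the kernel of a linear form, and the three cases can be dispatched by direct verification. I would flag in passing that generalizing to a larger output alphabet, so that $x$ has bit-length $m>2$ and $\{0^m,x\}$ ceases to be a hyperplane, would break linearity of $p^x$ and would require a substantially different choice of auxiliary state.
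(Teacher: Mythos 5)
Your proposal is correct and follows essentially the same route as the paper: the generalized Deutsch--Jozsa circuit with the auxiliary register chosen so that phase kickback produces $(-1)^{p^{x}(f(z))}$, followed by the standard interference argument on the amplitude of $|0\rangle^{\otimes n}$. Your identification of $s_x$ as the unique nonzero linear functional annihilating $\langle x\rangle$ is in fact a more careful statement of the paper's ``auxiliary input $\neg(x)$'' prescription (which, read literally, fails for $x=11$, where the correct auxiliary is $11$ rather than $00$), and your closing caveat about non-hyperplane subsets is consistent with the paper's restriction to index-two subgroups.
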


Already in the seminal work \cite{3}, it was pointed out that a classical randomized algorithm solves the Deutsch-Josza task with three classical queries on average, whereas the quantum approach solves it with probability 1 using one single query (see also \cite{4}). Here the output of the function $f$ is no more a single bit but a bit string. If the number of letters of the alphabet is $d$ then the output of the function is an ${n}$-bit string, where $n=\log_{2}d$. A word is given by $k$ repeated random output of the function, where $k$ is the length of the word. In other terms, a word is like a sequence obtained by tossing a dice with $d$ faces. It is easy to see that the probability of being constant over all possible anagrams, interpreting the output binary string of $k$ queries as anagrams of $k$ letters, is higher than in the balanced case. The difference decreases while increasing the number of queries.
As long as any possible parity function partitions into two classes the function co-domain, the number of quantum queries required to distinguish between the parity constant and parity balanced cases remains constant. This is due to the fact that binary strings always form a bipartition with respect to the Hamming weight.
The method described allows us to extend the Deutsch-Josza algorithm to functions with output of any dimension,  $f:\{0,1\}^{n}\longrightarrow \{0,1\}^{k}$. Defining appropriate parities, based on subgroups or code membership problems, could give arise to potentially interesting applications.

\section{Distinguishing between languages}

In this section we construct languages given by intersecting the images of binary maps. We show that acceptance of a word of length $k$ in one of these languages can be determined with $k$ quantum queries. This can be easily done on the basis of the discussion carried on in the previous section. The problem defined is artificial, but nonetheless indicates a way to use repeated applications of the modified Deutsch-Jozsa algorithm, with special reference to formal languages. We define languages constructed by intersecting the images of functions promised to be $x$-constant or $x$-balanced with respect to different subspaces. If $X \subset \{01,10,11\}$, let us write $$\mathcal{F}_{k}^{X}(%
\mathcal{A}) := \bigcap _{x\in
X}\mathcal{F}_{k}^{x}(\mathcal{A}).$$ We have

\begin{eqnarray*}
\mathcal{C}_{2}^{11}(a,b,c,d) &=&[aaaa]\cup \lbrack bbbb]\cup \lbrack
cccc]\cup \lbrack dddd]\cup \lbrack aaad] \\
&&\cup \lbrack aadd]\cup \lbrack addd]\cup \lbrack
bbbc]\cup \lbrack bbcc]\cup \lbrack bccc], \\
\mathcal{B}_{2}^{11}(a,b,c,d) &=&[aabb]\cup \lbrack aacc]\cup \lbrack
bbdd]\cup \lbrack ccdd] \\
&&\cup \lbrack aabc]\cup \lbrack bcdd]\cup \lbrack abbd]\cup \lbrack
accd]\cup \lbrack abcd], \\
\mathcal{C}_{2}^{01}(a,b,c,d) &=&[aaaa]\cup \lbrack bbbb]\cup \lbrack
cccc]\cup \lbrack dddd]\cup \lbrack aaab] \\
&&\cup \lbrack aabb]\cup \lbrack abbb]\cup \lbrack
cccd]\cup \lbrack ccdd]\cup \lbrack cddd], \\
\mathcal{B}_{2}^{01}(a,b,c,d) &=&[aacc]\cup \lbrack aadd]\cup \lbrack
bbcc]\cup \lbrack bbdd] \\
&&\cup \lbrack aacd]\cup \lbrack bbcd]\cup \lbrack abcc]\cup \lbrack
abdd]\cup \lbrack abcd], \\
\mathcal{F}_{2}^{\{01,11\}} &=&[aaaa]\cup \lbrack bbbb]\cup \lbrack
cccc]\cup \lbrack dddd]\cup \lbrack aabb] \\
&&\cup \lbrack aacc]\cup \lbrack aadd]\cup \lbrack bbcc]\cup \lbrack
bbdd]\cup \lbrack ccdd]\cup \lbrack abcd].
\end{eqnarray*}

We also have
\begin{eqnarray*}
\mathcal{B}_{2}^{11}(a,b,c,d)\cap \mathcal{B}_{2}^{01}(a,b,c,d)
&=&[abcd]\cup \lbrack aacc]\cup \lbrack bbdd], \\
\mathcal{C}_{2}^{11}(a,b,c,d)\cap \mathcal{B}_{2}^{01}(a,b,c,d)
&=&[aadd]\cup \lbrack bbcc], \\
\mathcal{B}_{2}^{11}(a,b,c,d)\cap \mathcal{C}_{2}^{01}(a,b,c,d)
&=&[aabb]\cup \lbrack ccdd], \\
\mathcal{C}_{2}^{11}(a,b,c,d)\cap \mathcal{C}_{2}^{01}(a,b,c,d)
&=&[aaaa]\cup \lbrack bbbb]\cup \lbrack cccc]\cup \lbrack dddd].
\end{eqnarray*} Therefore, given a word in
$\mathcal{F}_{2}^{\{01,11\}}$, we can decide with two quantum
queries in which of these four languages the word is. This is an improvement over the classical
deterministic setting, where we need at least $2^{n}-1$ queries for each function. The remaining
possibilities for $x$ are
\begin{eqnarray*}
\mathcal{C}_{2}^{10}(a,b,c,d) &=&[aaaa]\cup \lbrack bbbb]\cup \lbrack
cccc]\cup \lbrack dddd]\cup \lbrack aaac] \\
&&\cup \lbrack aacc]\cup \lbrack accc]\cup \lbrack bbbd]\cup \lbrack
bbdd]\cup \lbrack bddd], \\
\mathcal{B}_{2}^{10}(a,b,c,d) &=&[aabb]\cup \lbrack aadd]\cup \lbrack
bbcc]\cup \lbrack ccdd]\cup \lbrack aabd] \\
&&\cup \lbrack bccd]\cup \lbrack abbc]\cup \lbrack acdd]\cup \lbrack abcd],
\\
\mathcal{F}_{2}^{\{10,11\}} &=&[aaaa]\cup \lbrack bbbb]\cup \lbrack
cccc]\cup \lbrack dddd]\cup \lbrack aabb]\cup \lbrack aacc] \\
&&\cup \lbrack aadd]\cup \lbrack bbcc]\cup \lbrack
bbdd]\cup \lbrack ccdd]\cup \lbrack abcd].
\end{eqnarray*} We then have
$$\mathcal{F}_{2}^{\{01,11\}}=\mathcal{F}_{2}^{\{10,11\}}.$$ It can
be checked that this is also equal to
$\mathcal{F}_{2}^{\{01,10\}}$.
However, the three possibilities $X=\{01,11\},\{10,11\}$ and
$\{01,10\}$ all distinguish between different languages, since we
have
\begin{eqnarray*}
\mathcal{B}_{2}^{11}(a,b,c,d)\cap \mathcal{B}_{2}^{10}(a,b,c,d)
&=&[abcd]\cup \lbrack aabb]\cup \lbrack ccdd], \\
\mathcal{C}_{2}^{11}(a,b,c,d)\cap \mathcal{B}_{2}^{10}(a,b,c,d)
&=&[aadd]\cup \lbrack bbcc], \\
\mathcal{B}_{2}^{11}(a,b,c,d)\cap \mathcal{C}_{2}^{10}(a,b,c,d)
&=&[aacc]\cup \lbrack bbdd], \\
\mathcal{C}_{2}^{11}(a,b,c,d)\cap \mathcal{C}_{2}^{10}(a,b,c,d)
&=&[aaaa]\cup \lbrack bbbb]\cup \lbrack cccc]\cup \lbrack dddd], \\
\mathcal{B}_{2}^{01}(a,b,c,d)\cap \mathcal{B}_{2}^{10}(a,b,c,d)
&=&[abcd]\cup \lbrack aadd]\cup \lbrack bbcc], \\
\mathcal{C}_{2}^{01}(a,b,c,d)\cap \mathcal{B}_{2}^{10}(a,b,c,d)
&=&[aabb]\cup \lbrack ccdd], \\
\mathcal{B}_{2}^{01}(a,b,c,d)\cap \mathcal{C}_{2}^{10}(a,b,c,d)
&=&[aacc]\cup \lbrack bbdd], \\
\mathcal{C}_{2}^{01}(a,b,c,d)\cap \mathcal{C}_{2}^{10}(a,b,c,d)
&=&[aaaa]\cup \lbrack bbbb]\cup \lbrack cccc]\cup \lbrack dddd].
\end{eqnarray*}

We have then seen that $$\mathcal{F}_{2}^{\{01,11\}}=\mathcal{F}_{2}^{\{10,11\}}=%
\mathcal{F}_{2}^{\{01,10\}}=\mathcal{F}_{2}^{\{01,10,11\}}.$$ This fact will be useful later, when dealing with the word problem.

\section{Larger alphabets}

A similar approach can be taken for larger alphabets:
$$\{a,b,c,d,e,f,g,h\}\rightarrow\{000,001,010,011,100,101,110,111\}.$$
The previous treatment applies in a straightforward manner. It is in fact still possible to define a parity, based on the even
number of 1s, like $p^{11}$. This is equivalent to determine if a
word $w$ is in the subspace \{000,011,101,110\}, also denoted
$p^{adfg}$. In this case, the set of parity constant and parity
balanced words can be obtained using the auxiliary input
$|111\rangle $ in the circuit described before:
\begin{eqnarray*}
&&U_{f}\left( |z\rangle \otimes |-\rangle ^{\otimes 3}\right) \\
&=&|z\rangle \otimes \frac{1}{2}(|000\oplus f(z)\rangle -|001\oplus
f(z)\rangle -|010\oplus f(z)\rangle +|011\oplus f(z)\rangle \\&-&|100\oplus f(z)\rangle +|101\oplus f(z)\rangle +|110\oplus
f(z)\rangle -|111\oplus f(z)\rangle) \\
&=&(-1)^{p(f(z))}|z\rangle \otimes \left(-|-\rangle \right) ^{\otimes 3}.
\end{eqnarray*}%
Then $U_{f}$ gives the following set:
\begin{eqnarray*}
\mathcal{C}_{2}^{adfg}(a,b,c,d,e,f,g,h) &=&[aa]\cup \lbrack bb]\cup \lbrack
cc]\cup \lbrack dd]\cup \lbrack ee]\cup \lbrack ff]\cup \lbrack gg]\cup \lbrack hh]\cup \lbrack ad]\cup \lbrack af]\\
&&\cup \lbrack ag]\cup \lbrack df]\cup \lbrack dg]\cup \lbrack fg]\cup \lbrack bc]\cup \lbrack be]\cup \lbrack bh]\cup
\lbrack ce]\cup \lbrack ch]\cup \lbrack eh].
\end{eqnarray*}
Similarly, the set of parity balanced words is
\begin{eqnarray*}
\mathcal{B}_{2}^{adfg}(a,b,c,d,e,f,g,h) &=&[ab]\cup \lbrack ac]\cup \lbrack
bd]\cup \lbrack cd]\cup \lbrack ae]\cup \lbrack ah]\cup \lbrack de]\cup \lbrack dh]\\
&&\cup \lbrack bf]\cup \lbrack bg]\cup \lbrack cf]\cup \lbrack cg]\cup \lbrack fe]\cup \lbrack fh]\cup \lbrack ge]\cup
\lbrack gh].
\end{eqnarray*}

Other parities can be defined considering different set of
vectors. For our purposes it is sufficient to define a set composed
by the elements
$p^{abcd}=\{000,001,010,011\}$. This plays the same role as $p^{01}$. In this
case, the set of parity constant word can be obtained by using $|100\rangle $ as
auxiliary input. The circuit has the following output:
\begin{eqnarray*}
&&U_{f}\left(|z\rangle \otimes |-\rangle \otimes |+\rangle ^{\otimes 2}\right) \\
&=&|z\rangle \otimes \frac{1}{2}( |000\oplus f(z)\rangle +|001\oplus
f(z)\rangle +|010\rangle \oplus f(z)\rangle +|011\oplus f(z)\rangle \\ &-&|100\oplus
f(z)\rangle-|101\oplus
f(z)\rangle -|110\oplus
f(z)\rangle -|111\oplus
f(z)\rangle) \\
&=&(-1)^{p(f(z))}|z\rangle \otimes \left(-|-\rangle \right) \otimes |+\rangle ^{\otimes 2}.
\end{eqnarray*}%
As we have said before, this procedure gives
\begin{eqnarray*}
\mathcal{C}_{2}^{abcd}(a,b,c,d,e,f,g,h) &=&[aa]\cup \lbrack bb]\cup \lbrack
cc]\cup \lbrack dd]\cup \lbrack ee]\cup \lbrack ff]\cup \lbrack gg]\cup \lbrack hh]\cup \lbrack ab]\cup \lbrack ac] \\
&&\cup\lbrack ad]\cup \lbrack bc]\cup \lbrack bd]\cup \lbrack cd]\cup \lbrack ef]\cup \lbrack eg]\cup \lbrack eh]\cup
\lbrack fg]\cup \lbrack fh]\cup \lbrack gh]
\end{eqnarray*}%
and
\begin{eqnarray*}
\mathcal{B}_{2}^{abcd}(a,b,c,d,e,f,g) &=&[ae]\cup \lbrack af]\cup \lbrack
ag]\cup \lbrack ah]\cup \lbrack be]\cup \lbrack bf]\cup \lbrack bg]\cup \lbrack bh]\\
&&\cup \lbrack ce]\cup \lbrack cf]\cup \lbrack cg]\cup \lbrack ch]\cup \lbrack de]\cup \lbrack df]\cup \lbrack dg]\cup \lbrack dh].
\end{eqnarray*}
For reasons that will be clear later, it is important to define
also the parity, based on the subspace
$p^{adeh}=\{000,011,101,111\}$, for which the set of parity
constant words is obtained by setting as auxiliary input the state
$|011\rangle $:
\begin{eqnarray*}
&&U_{f}\left(|z\rangle \otimes |+\rangle \otimes |-\rangle ^{\otimes 2} \right) \\
&=&|z\rangle \otimes \frac{1}{2}( |000\oplus f(z)\rangle -|001\oplus
f(z)\rangle -|010\rangle \oplus f(z)\rangle +|011\oplus f(z)\rangle+ \\ &-&|100\oplus
f(z)\rangle+|101\oplus
f(z)\rangle -|110\oplus
f(z)\rangle +|111\oplus
f(z)\rangle) \\
&=&(-1)^{p(f(z))}|z\rangle \otimes \left(|-\rangle \right)^{\otimes 2} \otimes |+\rangle.
\end{eqnarray*}
The sets produced are
\begin{eqnarray*}
\mathcal{C}_{2}^{adeh}(a,b,c,d,e,f,g,h) &=&[aa]\cup \lbrack bb]\cup \lbrack
cc]\cup \lbrack dd]\cup \lbrack ee]\cup \lbrack ff]\cup \lbrack gg]\cup \lbrack hh]\cup \lbrack ad]\cup \lbrack ae]\\
&&\cup \lbrack ah]\cup \lbrack de]\cup \lbrack dh]\cup \lbrack eh]\cup \lbrack bc]\cup \lbrack bf]\cup \lbrack bg]\cup
\lbrack cf]\cup \lbrack cg]\cup \lbrack fg];
\end{eqnarray*}
for the balanced case, we have
\begin{eqnarray*}
\mathcal{B}_{2}^{adeh}(a,b,c,d,e,f,g,h) &=&[ab]\cup \lbrack ac]\cup \lbrack
af]\cup \lbrack ag]\cup \lbrack db]\cup \lbrack dc]\cup \lbrack df]\cup \lbrack dg]\\
&&\cup \lbrack eb]\cup \lbrack ec]\cup \lbrack ef]\cup \lbrack eg]\cup \lbrack hb]\cup \lbrack hc]\cup \lbrack hf]\cup
\lbrack hg].
\end{eqnarray*}

It is indeed possible to generalize the circuit for an arbitrary
length binary function co-domain. In particular, the length of the
output binary string will be determined by the logarithm of the
cardinality of the alphabet considered (for example, two bits for a 4-elements
alphabet). Moreover, to each parity function subspace corresponds
a unique input to be fed into the circuit shown before. The Hadamard gate transforms each qubit of the input binary string
into the state $|+\rangle$ or $|-\rangle$ depending on the value of
the qubit. For the generic input
$|0\ldots1\rangle$, we have

\begin{eqnarray*}
&&U_{f}\left(|z\rangle \otimes |+\rangle \otimes \ldots \otimes |+\rangle \otimes |-\rangle \right)=(-1)^{p(f(x))}|z\rangle \otimes |+\rangle ^{\otimes n+1}.
\end{eqnarray*}

If $k=4$,
for $p^{adfg}$, the set of parity balanced
words is
\begin{eqnarray*}
\mathcal{C}_{4}^{adfg}(a,b,c,d,e,f,g,h) &=&[aaaa]\cup \lbrack bbbb]\cup
\lbrack cccc]\cup \lbrack dddd]\cup \lbrack eeee] \\
&&\cup \lbrack ffff]\cup \lbrack gggg]\cup \lbrack hhhh]\cup \lbrack
aaad]\cup \lbrack aadd] \\
&&\cup \lbrack addd]\cup \lbrack aaaf]\cup \lbrack aaff]\cup \lbrack
afff]\cup \lbrack aaag] \\
&&\cup \lbrack aggg]\cup \lbrack dddf]\cup \lbrack ddff]\cup \lbrack
dfff]\cup \lbrack fffg] \\
&&\cup \lbrack ffgg]\cup \lbrack fggg]\cup \lbrack bbbc]\cup \lbrack
bbcc]\cup \lbrack bccc] \\
&&\cup \lbrack bbbe]\cup \lbrack bbee]\cup \lbrack beee]\cup \lbrack
ccce]\cup \lbrack ccee] \\
&&\cup \lbrack ceee]\cup \lbrack bbbh]\cup \lbrack bbhh]\cup \lbrack
bhhh]\cup \lbrack ccch] \\
&&\cup \lbrack cchh]\cup \lbrack chhh]\cup \lbrack eeeh]\cup \lbrack
eehh]\cup \lbrack ehhh];
\end{eqnarray*}
while the set of parity balanced words is
\begin{eqnarray*}
\mathcal{B}_{4}^{adfg}(a,b,c,d,e,f,g,h) &=&[aabb]\cup \lbrack aacc]\cup
\lbrack aaee]\cup \lbrack aahh]\cup \lbrack ddbb] \\
&&\cup \lbrack ddcc]\cup \lbrack ddee]\cup \lbrack ddhh]\cup \lbrack
ffbb]\cup \lbrack ffcc] \\
&&\cup \lbrack ffee]\cup \lbrack ffhh]\cup \lbrack ggbb]\cup \lbrack
ggcc]\cup \lbrack ggee] \\
&&\cup \lbrack gghh]\cup \lbrack adbc]\cup \lbrack afce]\cup \lbrack
agbc]\cup \lbrack agbe] \\
&&\cup \lbrack agce]\cup \lbrack adce]\cup \lbrack adbe]\cup \lbrack
adhe]\cup \lbrack agch] \\
&&\cup \lbrack afce]\cup \lbrack afch]\cup \lbrack adbe]\cup \lbrack
adbh]\cup \lbrack afbc] \\
&&\cup \lbrack afbh]\cup \lbrack agbh]\cup \lbrack ageh]\cup \lbrack
afeh]\cup \lbrack afbe].
\end{eqnarray*}

For $p^{abcd}$, we have
\begin{eqnarray*}
\mathcal{C}_{4}^{abcd}(a,b,c,d,e,f,g,h) &=&[aaaa]\cup \lbrack bbbb]\cup
\lbrack cccc]\cup \lbrack dddd]\cup \lbrack eeee] \\
&&\cup \lbrack ffff]\cup \lbrack gggg]\cup \lbrack hhhh]\cup \lbrack
aaab]\cup \lbrack aabb] \\
&&\cup \lbrack abbb]\cup \lbrack aaac]\cup \lbrack aacc]\cup \lbrack
accc]\cup \lbrack aaad] \\
&&\cup \lbrack aadd]\cup \lbrack addd]\cup \lbrack bbbc]\cup \lbrack
bbcc]\cup \lbrack bccc] \\
&&\cup \lbrack bbbd]\cup \lbrack bbdd]\cup \lbrack bddd]\cup \lbrack
cccd]\cup \lbrack ccdd] \\
&&\cup \lbrack cddd]\cup \lbrack eeef]\cup \lbrack eeff]\cup \lbrack
efff]\cup \lbrack eeeg] \\
&&\cup \lbrack eegg]\cup \lbrack eggg]\cup \lbrack eeeh]\cup \lbrack
eehh]\cup \lbrack ehhh] \\
&&\cup \lbrack fffg]\cup \lbrack ffgg]\cup \lbrack fffh]\cup \lbrack
ffhh]\cup \lbrack fhhh] \\
&&\cup \lbrack gggh]\cup \lbrack gghh]\cup \lbrack ghhh]
\end{eqnarray*}%
and
\begin{eqnarray*}
\mathcal{B}_{4}^{abcd}(a,b,c,d,e,f,g,h) &=&[aaee]\cup \lbrack aaff]\cup
\lbrack aagg]\cup \lbrack aahh]\cup \lbrack bbee] \\
&&\cup \lbrack bbff]\cup \lbrack bbgg]\cup \lbrack bbhh]\cup \lbrack
ccee]\cup \lbrack ffcc] \\
&&\cup \lbrack ccgg]\cup \lbrack cchh]\cup \lbrack ddee]\cup \lbrack
ddff]\cup \lbrack ddgg] \\
&&\cup \lbrack ddhh]\cup \lbrack abef]\cup \lbrack abeg]\cup \lbrack
abeh]\cup \lbrack acef] \\
&&\cup \lbrack aceg]\cup \lbrack aceh]\cup \lbrack adef]\cup \lbrack
adeg]\cup \lbrack adeh] \\
&&\cup \lbrack abfg]\cup \lbrack abfh]\cup \lbrack acfg]\cup \lbrack
acfh]\cup \lbrack adfg] \\
&&\cup \lbrack adfh]\cup \lbrack agbh]\cup \lbrack agch]\cup \lbrack adgh].
\end{eqnarray*}

The same reasoning carried on for a four-letter alphabet can be applied to form the set of words $$\mathcal{F}%
_{k}^{x}(\mathcal{A})=\mathcal{C}_{k}^{x}(\mathcal{A})\cup \mathcal{B}%
_{k}^{x}(\mathcal{A})$$ and the relative intersections. A potential
generalization could arise in the context of error correcting codes. This
could be based on introducing an encoding in which the letters of the
alphabet are associated to the codewords of a subspace quantum error correcting code. A
form of parity could be defined by considering the remaining subspaces.

\section{Applications to the word problem in groups}

Let $\{a,b,c=B,d=A\}$ be a paired alphabet, where $A$ represents
$a^{-1}$ and $B$ represents $b^{-1}$. We first consider words of
length $2$. Parity constant words are \textquotedblleft character
constant\textquotedblright , \emph{i.e.} consist of only one letter,
whether it be lower or upper case. Parity balanced words are
\textquotedblleft character balanced\textquotedblright . The words
corresponding to the parity constant case are $aa$,$aA$,$bb$,$bB$,$Bb$,$BB$,$%
Aa$,$AA$. Those corresponding to the parity balanced case are $ab$,$aB$,$%
ba$,$bA$,$Ba$,$BA$,$Ab$,$AB$. The words $w$ in the first list all
satisfy ${w}\in \langle a\rangle \cup \langle b\rangle $ (in fact
we have ${w}\in \langle a^{2}\rangle \cup \langle b^{2}\rangle $),
whereas those $w$ in the second list all satisfy ${w}\notin
\langle a\rangle \cup \langle b\rangle $. Thus, for words of
length $2$, we can determine with a single measurement whether or
not ${w}\in \langle
a\rangle \cup \langle b\rangle $.

If $x=01$ then the $x$-constant words are $%
aa$,$ab$,$ba$,$bb$,$BA$,$BB$,$AA$,$AB$ and the $x$-balanced words are $%
aB $,$aA$,$bB$,$bA$,$Aa$,$Ab$,$Ba$,$Bb$. So, $01$-constant and $01$%
-balanced may be thought of as \textquotedblleft case
constant\textquotedblright\ and \textquotedblleft case
balanced\textquotedblright\, where the case can be upper or lower. For
example, a commutator word (reduced or not) is always case balanced.
\textquotedblleft Case constant\textquotedblright\ and \textquotedblleft
case balanced\textquotedblright\ are properties of $\overline{w}$, rather
than $w$. This is not the case for \textquotedblleft parity
constant\textquotedblright\ and \textquotedblleft parity
balanced\textquotedblright.

If $x=10$ then the $x$-constant words are $aa$,$%
aB$,$bb$,$bA$,$Ba$,$BB$,$Ab$,$AA$ and the $x$-balanced words are $ab$,$%
aA $,$ba$,$bB$,$Bb$,$BA$,$AB$,$Aa$. This does not seem to have any
nice interpretation. The 10-balanced corresponds to the
cyclic subgroup generated by $ab$ and the 11-constant set solves a
problem of union of subgroup membership for $\langle a\rangle \cup
\langle b\rangle $. The elements represented by these words are
depicted on the following Cayley graph portions:

\begin{center}\includegraphics{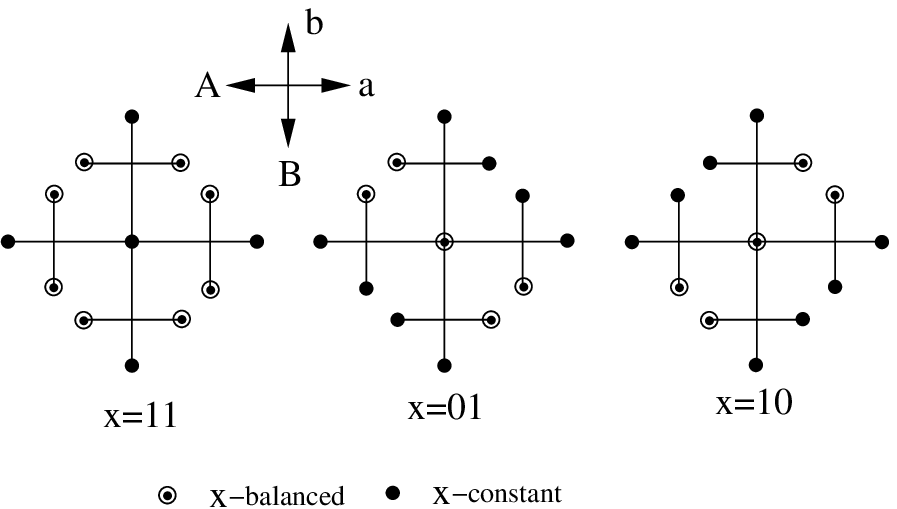}
\end{center}

Note that $w$ is $11$-constant but not $01$-constant; also $w$ is $%
11$-constant and not $10$-constant. Then $w=\mathcal{F}_{2}^{11}$.
This gives a method of solving the word problem for words of length $%
2$ using two quantum queries.

For $k=2$, if we are promised that $w$ is $x$-feasible then the
quantum query complexity of the property ``is $w$ trivial?'' seems to be $2$. 
But this is not a reduction in complexity from the classical case.
However, there is hope that an analogous method might be an improvement in
quantum query complexity for longer words. We have the following:

\begin{proposition}
For all $n$, if we are promised that the word $w$ of length $2^{n}$ is $11$%
-feasible then the quantum query complexity of the property
\textquotedblleft Does $w$ represent an element of $\langle a\rangle \cup
\langle b\rangle $?\textquotedblright\ is 1.
\end{proposition}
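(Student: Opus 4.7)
The plan is to run the Deutsch--Jozsa-style circuit of the Theorem exactly once, with the $11$-parity choice (auxiliary register $|11\rangle$), applied to the function $f:\{0,1\}^{n}\to\{0,1\}^{2}$ whose outputs spell $w$. By the $11$-feasibility promise, $f$ is either $11$-constant or $11$-balanced, and the Theorem guarantees that a single query distinguishes these two cases with certainty. It therefore suffices to show that this dichotomy matches the answer to the question ``does $w$ represent an element of $\langle a\rangle\cup\langle b\rangle$?''

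Under the encoding $a\leftrightarrow 00$, $b\leftrightarrow 01$, $c=B\leftrightarrow 10$, $d=A\leftrightarrow 11$, the parity-$0$ letters are precisely $\{a,A\}$ and the parity-$1$ letters are precisely $\{b,B\}$. In the $11$-constant case every letter of $w$ lies in just one of these two pairs, so $w$ is a word over $\{a,A\}$ or over $\{b,B\}$; either way it belongs to $\langle a\rangle\cup\langle b\rangle$ on the nose, with no cancellation required. This settles one direction directly from the parity encoding, independently of $n$.

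The main obstacle is the converse: showing that a $11$-balanced $w$ cannot represent an element of $\langle a\rangle\cup\langle b\rangle$. My approach is to pass to the freely reduced representative of $w$ and use the fact that every element of $\langle a\rangle$ (respectively $\langle b\rangle$) in the free group on $\{a,b\}$ has a unique reduced form that is a power of $a$ (respectively $b$) and hence uses only letters of a single pair. The delicate point is that free reduction can dissolve the $11$-balanced signature visible at the word level; for instance $abBa$ has length~$4$ and is $11$-balanced, yet freely reduces to $a^{2}\in\langle a\rangle$. The cleanest way to close the argument is therefore to read the proposition as applying to $w$ already in freely reduced form, in which case the contrapositive is immediate from the parity encoding and the algorithm reports the correct answer from its single query.
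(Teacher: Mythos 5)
Your proposal follows the same route the paper intends: run the $11$-parity Deutsch--Jozsa circuit once on the function $f:\{0,1\}^{n}\to\{0,1\}^{2}$ whose outputs spell $w$, and translate ``$11$-constant versus $11$-balanced'' into ``in versus not in $\langle a\rangle\cup\langle b\rangle$.'' The paper's own proof consists only of the remark that this is ``directly analogous to the Deutsch--Josza algorithm,'' supported by the explicit length-$2$ enumeration in the surrounding text, so your write-up is strictly more careful than the original. In particular, you have put your finger on a genuine defect that the paper glosses over: for $n\ge 2$ the balanced direction is false as literally stated, since a $11$-balanced word such as $abBa$ freely reduces to $a^{2}\in\langle a\rangle$, and the single query would then report the wrong answer. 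Your repair --- restrict the promise to freely reduced $w$, then use uniqueness of reduced forms in the free group to conclude that a reduced word containing letters from both $\{a,A\}$ and $\{b,B\}$ (which every $11$-balanced word of length $2^{n}\ge 2$ must, as exactly half its letters have odd parity) cannot be a power of $a$ or of $b$ --- is the right one and closes the argument; the constant direction needs no such care, since a $11$-constant word is literally a product of powers of a single generator. I would only suggest stating the extra hypothesis as an explicit correction to the proposition rather than as a ``reading'' of it: the length-$2$ case that the paper checks is immune to this problem because adjacent letters of a balanced length-$2$ word come from different pairs and so cannot cancel, which is precisely why the gap is invisible there.
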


This is directly analogous to the Deutsch-Josza algorithm, and the proof is
the same. It is unclear how to extend this approach for the word problem
beyond two letters. Here are examples of two groups where we require
different promises:

\begin{proposition}
Consider the free abelian group $G=\langle a,b\mid ab=ba\rangle $. Let $w$
be a four-letter word in $\mathcal{A}$ which is in $\mathcal{F}_{2}^{11}\cap
\mathcal{F}_{2}^{01}\cap \mathcal{F}_{2}^{10}$. Then the quantum query
complexity of the question \textquotedblleft Does $w$ represent the trivial
element of $G$?\textquotedblright\ is at most $3$.
\end{proposition}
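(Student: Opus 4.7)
The plan is to use three independent applications of the $x$-parity version of the Deutsch--Jozsa algorithm from Theorem~1, one for each $x\in\{11,01,10\}$, each invoking the oracle $U_f$ representing $w$ but with the auxiliary input $\neg(x)$. Since $w$ is promised to lie in $\mathcal{F}_2^{11}\cap\mathcal{F}_2^{01}\cap\mathcal{F}_2^{10}$, each query deterministically returns a label $t_x\in\{C,B\}$ telling us whether $w$ is $x$-constant or $x$-balanced. The total cost is three quantum queries, and the remainder of the argument is purely classical.

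Next I would argue that the triple $(t_{11},t_{01},t_{10})$ already determines triviality in $G$. Writing $n_a,n_b,n_c,n_d$ for the letter counts of $w$, the word is trivial in $G$ iff $n_a=n_d$ and $n_b=n_c$. Unpacking the subspace definitions of $p^{11}$, $p^{01}$ and $p^{10}$, the $x$-balanced cases correspond respectively to $n_b+n_c=2$, $n_c+n_d=2$ and $n_b+n_d=2$ (because the word has length $4$ and lies in the appropriate $\mathcal{F}_2^x$). Using these conditions together with $n_a+n_b+n_c+n_d=4$, I would tabulate the type $(t_{11},t_{01},t_{10})$ for each of the eleven anagram classes of $\mathcal{F}_2^{\{01,10,11\}}$ already enumerated in Section~2.

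Running through the tabulation, the trivial classes $[aadd]$, $[bbcc]$ and $[abcd]$ yield types $(C,B,B)$, $(C,B,B)$ and $(B,B,B)$ respectively, while every non-trivial class yields one of $(C,C,C)$, $(B,C,B)$ or $(B,B,C)$. Since $\{(C,B,B),(B,B,B)\}$ is disjoint from $\{(C,C,C),(B,C,B),(B,B,C)\}$, the rule ``output \emph{trivial} iff $(t_{11},t_{01},t_{10})\in\{(C,B,B),(B,B,B)\}$'' is correct, so three quantum queries suffice. The only real obstacle is this finite case check, but it is already facilitated by the explicit intersection tables in Section~2; indeed, one could observe as a byproduct that the $t_{11}$ bit is redundant, since $(t_{01},t_{10})=(B,B)$ by itself characterises the trivial triples, which shows that two queries are in fact enough and comfortably satisfies the ``at most $3$'' bound claimed.
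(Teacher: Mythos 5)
Your proof is correct and uses essentially the same idea as the paper: the three $x$-parity Deutsch--Jozsa queries for $x\in\{11,01,10\}$ followed by a classical case analysis of the eleven anagram classes of $\mathcal{F}_2^{\{01,10,11\}}$; the only organizational difference is that the paper runs the queries adaptively ($01$, then $11$, then $10$, stopping early when the answer is determined), whereas you run them non-adaptively and read triviality off the resulting type vector. Your closing observation is also correct and actually sharpens the stated bound: since triviality in $\mathbb{Z}^2$ is equivalent to $n_a=n_d$ and $n_b=n_c$, and the paper's own table gives $\mathcal{B}_2^{01}\cap\mathcal{B}_2^{10}=[abcd]\cup[aadd]\cup[bbcc]$, which are exactly the trivial classes, the $11$-query is redundant and two queries suffice under the given promise.
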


\begin{proof}
The first query asks whether $w\in \mathcal{C}_{2}^{01}$ or $w\in \mathcal{B}%
_{2}^{01}$. If the former is true then $w$ is not trivial so stop. If $w\in
\mathcal{B}_{2}^{01}$ then proceed to the second query, which is whether $%
w\in \mathcal{C}_{2}^{11}$ or $w\in \mathcal{B}_{2}^{11}$. If the former is
true then $w$ is trivial so stop. Otherwise we know that $w\in \mathcal{B}%
_{2}^{11}\cap B_{2}^{01}$ and we may proceed to the third query. There are
two possibilities. The first possibility is that we have a word with two $A$%
s and two $bs$ or a word with two $a$s and two $B$s. That is, $w$ is a
cyclic rotation of $(AAbb)^{\pm 1}$. The second possibility is that we have
one each of $A$, $b$, $a$ and $B$. In the first case, $w$ is nontrivial and
in $\mathcal{C}_{2}^{10}$; in the second case, $w$ is trivial and in $\mathcal{%
B}_{2}^{01}$. So our third query is whether $w\in \mathcal{C}_{2}^{10}$ or $%
w\in \mathcal{B}_{2}^{10}$; this solves the word problem provided $w$ is as
promised.
\end{proof}

\bigskip

It is indeed possible to generalize this theorem to the 8-letters
alphabet introduced earlier, by considering the four-paired alphabet
$\{a,b,c,d,e=D,f=C,g=B,h=A\}$, where the upper-case $A,B,C,D$
letters represent respectively $a^{-1},b^{-1},c^{-1},d^{-1}$. In
particular, we have the following statement:

\begin{proposition}
Consider the free group $G=\langle a,b,c,d\mid abcd=dcba\rangle $. Let $w$
be a 8-letter word in $\mathcal{A}$ which is in $\mathcal{F}_{3}^{adfg}\cap
\mathcal{F}_{3}^{abcd}\cap \mathcal{F}_{3}^{adeh}$. Then the quantum query
complexity of the question \textquotedblleft Does $w$ represent the trivial
element of $G$?\textquotedblright\ is at most $3$.
\end{proposition}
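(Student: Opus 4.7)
The plan is to follow the blueprint of the preceding proposition, making three successive quantum queries, one for each of the parities $abcd$, $adeh$ and $adfg$. By the theorem of Section~1 each query costs a single oracle call, and the feasibility promise ensures that every query returns a definite bit recording membership in $\mathcal{C}_3^x$ versus $\mathcal{B}_3^x$.

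First, I would query the $abcd$-parity. If $w\in\mathcal{C}_3^{abcd}$ then every letter of $w$ lies in $\{a,b,c,d\}$ or every letter lies in $\{e,f,g,h\}=\{D,C,B,A\}$, so $w$ is a non-empty positive or negative word of length eight. The defining relation $abcd=dcba$ collapses to an identity after abelianising, so $G^{\mathrm{ab}}\cong\mathbb{Z}^4$, and any such $w$ has non-zero image there and is therefore nontrivial; we halt. Otherwise $w\in\mathcal{B}_3^{abcd}$, with exactly four lowercase and four uppercase letters, and we continue.

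Next I would query the $adeh$-parity. A constant outcome places all letters in one of the pair-groups $\{a,A,d,D\}$ or $\{b,B,c,C\}$; by the Freiheitssatz applied to the one-relator presentation of $G$, $w$ then lies in a free subgroup of rank $2$ on which $abcd=dcba$ has no effect, and the previous $abcd$-balance already forces each generator to occur as often as its inverse. A balanced outcome leaves $w$ with two lowercase and two uppercase letters in each pair-group, which is the branch containing the relator class $[abcdhgfe]$. In either branch the third $adfg$-query is intended to refine the anagram class of $w$ far enough to decide triviality, in the same spirit in which the third query of Proposition~3 separated $[abcd]$ from $[aacc]\cup[bbdd]$.

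The main obstacle will be the non-abelianness of $G$: a single anagram class need not be uniformly trivial or uniformly nontrivial, as witnessed by the length-$8$ word $aAdDaAdD$, which reduces freely to $1$, while its anagram $adADadAD$ equals $[a,d]^2\neq 1$ in $G$. The proof must therefore verify that, within the promised intersection $\mathcal{F}_3^{adfg}\cap\mathcal{F}_3^{abcd}\cap\mathcal{F}_3^{adeh}$, the three parity bits together do uniquely determine triviality in $G$; in the relator branch in particular, this amounts to a finite case check using the length-$8$ classification of $x$-feasible anagram classes (constructed exactly as the length-$4$ tables of Section~3), showing that the only trivial words with the recorded three-bit signature are those obtainable from the empty word by free cancellations together with at most one application of the relation $abcd=dcba$.
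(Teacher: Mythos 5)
There is a genuine gap: what you have written is a plan rather than a proof. Your first query matches the paper's (the paper also opens with the $abcd$-parity and discards the constant case as nontrivial), but for the second and third queries you explicitly defer the decisive step --- ``the proof must therefore verify that \dots the three parity bits together do uniquely determine triviality \dots this amounts to a finite case check'' --- and never carry out that check. The paper does commit to the case analysis: its second query is the $adfg$-parity (not $adeh$, as in your ordering), with the claim that $w\in\mathcal{C}_{3}^{adfg}$ forces triviality; its third query then splits $\mathcal{B}_{3}^{adfg}\cap\mathcal{B}_{3}^{abcd}$ into the ``doubled-pair'' classes (cyclic rotations of $(AADDaadd)^{\pm1}$ or $(BBCCbbcc)^{\pm1}$, asserted nontrivial and $adeh$-constant) and the ``one of each letter'' class (asserted trivial and balanced), and distinguishes them with the $adeh$-parity. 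Without supplying the analogous classification, your argument does not establish the bound of $3$.

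Worse, the obstacle you flag in your final paragraph is not a technicality to be verified later: your own pair $aAdDaAdD$ (freely trivial) and $adADadAD$ (a nontrivial product of commutators of $a$ and $d$) are anagrams of one another, hence have identical membership in every $\mathcal{C}_{3}^{x}/\mathcal{B}_{3}^{x}$ pair --- both are $abcd$-balanced, $adfg$-balanced and $adeh$-constant, and both satisfy the stated feasibility promise. So the three parity bits genuinely cannot separate them, and the ``finite case check'' you propose cannot succeed as stated; any correct completion would have to strengthen the promise (e.g.\ to restrict the doubled-pair branch to the specific cyclic rotations the paper names, rather than the full anagram class). A smaller but real error in the same vein: $abcd$-balance gives four lowercase and four uppercase letters, which does not force each generator to occur as often as its inverse (consider $aaaaAAAD$), so the exponent-sum claim in your $adeh$-constant branch also fails.
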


\begin{proof}
The first query asks whether $w\in \mathcal{C}_{3}^{abcd}$ or $w\in \mathcal{B}%
_{3}^{abcd}$. If the former is true then $w$ is not trivial so stop. If $w\in
\mathcal{B}_{3}^{abcd}$ then proceed to the second query, which is whether $%
w\in \mathcal{C}_{3}^{adfg}$ or $w\in \mathcal{B}_{3}^{adfg}$. If the former is
true then $w$ is trivial so stop. Otherwise we know that $w\in \mathcal{B}%
_{3}^{adfg}\cap B_{3}^{abcd}$ and we may proceed to the third query. There are
two possibilities. The first possibility is that we have a word with two $A$%
s two $D$s and two $as$ and two $ds$  or a word with two $C$s two $B$s, two $c$s and two $b$s. That is, $w$ is a
cyclic rotation of $(AADDaadd)^{\pm 1}$ or  $(BBCCbbcc)^{\pm 1}$. The second possibility is that we have
one each of $A$, $b$, $a$ $B$, $C$, $d$,$c$, and $D$ . In the first case, $w$ is nontrivial and
in $\mathcal{C}_{3}^{adeh}$; in the second, $w$ is trivial and in $\mathcal{%
B}_{3}^{abcd}$. Our third query is whether $w\in \mathcal{C}_{3}^{adeh}$ or $%
w\in \mathcal{B}_{3}^{abcd}$; this solves the word problem provided $w$ is as
promised.
\end{proof}

\bigskip

Looking at the first two queries it seems possible to generalize
this result for every paired alphabet of dimension $2^{n-1}$ and
words of length $2^{n}$, by defining parities based on the even
number of ones, like $p^{adfg}$. This is always possible because of
the equipartition of the binary strings with respect to Hamming weight. The last parity required is the one
used to identify words that are cyclic permutations of elements of
the alphabet, for example, $p^{adeh}$.
It does not seem easy to distinguish between trivial and nontrivial
four-letter words in the free group of rank $2$ using less than $4$
quantum queries. However, the first indication that classical query
complexity can be improved upon in a nonabelian finitely presented
group is the following:

\begin{proposition}
Consider the group presented by $G=\langle a,b\mid a^{2}=b^{2}\rangle $.
Suppose we are given a word $w$ of length $4$ in $\mathcal{A}$ such that $%
w\in \mathcal{F}_{2}^{11}\cap \mathcal{F}_{2}^{01}$. Then the quantum query
complexity of the question \textquotedblleft Does $w$ represent the trivial
element of $G$?\textquotedblright\ is at most $3$.
\end{proposition}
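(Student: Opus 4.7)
The plan is to mirror the three-query strategy of Proposition 4, now inside $G = \langle a, b \mid a^2 = b^2 \rangle$, cycling adaptively through the parities $x = 01, 11, 10$ and reading off triviality from the catalog of feasible anagram classes tabulated in Section 2. The first query asks whether $w\in\mathcal{C}_2^{01}$ or $w\in\mathcal{B}_2^{01}$. The feasible classes in $\mathcal{C}_2^{01}$ are $[aaaa]$, $[bbbb]$, $[cccc]$, $[dddd]$, $[aabb]$ and $[ccdd]$; using the central extension $1 \to \langle a^2\rangle \to G \to \mathbb{Z}_2 * \mathbb{Z}_2 \to 1$ together with the fact that $a^2$ has infinite order in $G$, every anagram in these six classes is nontrivial (the first four are nonzero powers of the central element, and each anagram of $[aabb]$ or $[ccdd]$ either equals $a^{\pm 4}$ or projects to $(st)^{\pm 2}\neq 1$ in the infinite dihedral quotient). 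If this query returns $\mathcal{C}_2^{01}$, output ``nontrivial''.

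Otherwise $w\in\mathcal{B}_2^{01}$, and I ask $\mathcal{C}_2^{11}$ vs.\ $\mathcal{B}_2^{11}$. The feasible classes in $\mathcal{B}_2^{01}\cap\mathcal{C}_2^{11}$ are $[aadd]$ and $[bbcc]$, every anagram of which is freely trivial via iterated $aA$ or $bB$ cancellation, so on that branch output ``trivial''. Finally, if $w\in\mathcal{B}_2^{01}\cap\mathcal{B}_2^{11}$, the feasible classes are $\{[aacc],[bbdd],[abcd]\}$; the third query against $x=10$ separates $\{[aacc],[bbdd]\}$ (in $\mathcal{C}_2^{10}$) from $\{[abcd]\}$ (in $\mathcal{B}_2^{10}$), and the residual claim to prove is that the $\mathcal{C}_2^{10}$ branch is uniformly nontrivial in $G$ while the $\mathcal{B}_2^{10}$ branch is uniformly trivial, exactly parallel to Proposition 4.

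The main obstacle, and the step that is genuinely harder than in the free abelian case, is verifying this last dichotomy. Unlike in the free abelian group, the three remaining classes are \emph{not} uniform with respect to triviality in $G$: the relator gives $aaBB = 1$ whereas $aBaB = (ab^{-1})^2$ projects to $(st)^2\neq 1$ in $D_\infty$, both inside $[aacc]$, with analogous behaviour inside $[bbdd]$, and inside $[abcd]$ the commutator $abAB = [a,b]$ is nontrivial while $abBA$ freely cancels. Bridging this gap will require either strengthening the promise so that each surviving class retracts to its trivial (respectively nontrivial) anagrams, or enriching the parity oracle, in the sense of Section 1, with one sensitive to the cyclic positional pattern and still implementable by a single call to the generalized Deutsch-Jozsa circuit. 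Carrying out one of these refinements, and checking that no fourth query is needed, is where the bulk of the work lies.
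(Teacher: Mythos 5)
Your first two queries coincide with the paper's (the paper simply imports them from the preceding proposition), and your justification that every surviving class of $\mathcal{C}_{2}^{01}$, namely $[aaaa],[bbbb],[cccc],[dddd],[aabb],[ccdd]$, is nontrivial in \emph{this} group $G$ --- via the central extension $1\to\langle a^{2}\rangle\to G\to\mathbb{Z}_{2}\ast\mathbb{Z}_{2}\to 1$ and the infinite dihedral quotient --- is in fact more careful than anything written in the paper. Likewise your identification of the residual classes $[aadd]\cup[bbcc]$ (freely trivial) after the second query, and of $[aacc]\cup[bbdd]\cup[abcd]$ as what remains, matches the paper exactly.

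The genuine gap is the third query. You correctly observe that no letter-level parity $p^{x}$ can finish the job, because the surviving anagram classes are not uniform with respect to triviality in $G$ (e.g.\ $aaBB=1$ but $aBaB\neq 1$ inside $[aacc]$, and $abAB\neq 1$ but $abBA=1$ inside $[abcd]$); but having diagnosed this you stop, deferring the construction of a finer oracle to ``future work.'' That missing construction is precisely the content of the paper's proof: a \emph{syllable function}, which reads the length-$4$ word $w$ as its two $2$-letter blocks $f(0)$ and $f(1)$ and assigns a bit to each block, sending $AA,BB,Aa,aA,Ab,AB,ab,aB$ to $0$ and $aa,bb,Bb,bB,bA,BA,ba,Ba$ to $1$; the third query asks whether $w$ is constant or balanced under this block-level predicate, again with a single Deutsch--Jozsa call. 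This labelling resolves exactly the non-uniformity you flagged: $aaBB$ and $AAbb$ are syllable-balanced (and trivial, using $a^{2}=b^{2}$), $aBaB$ and $AbAb$ are syllable-constant (and nontrivial, surviving in the $D_{\infty}$ quotient), the commutator $abAB$ is syllable-constant (nontrivial) while $abBA$ is syllable-balanced (freely trivial). So your proposal reduces the problem to the right residual question but omits the one genuinely new idea --- replacing linear parities on letters by a hand-crafted balanced-versus-constant predicate on syllables --- and therefore does not establish the three-query bound.
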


\begin{proof}
The first two queries are as in the proof of the last proposition.
So we can assume that if we do not already know whether or not $w$
is trivial, $w\in \mathcal{B}_{2}^{11}\cap B_{2}^{01}$ and we may
proceed to the third query. For this, we construct a
\textquotedblleft syllable function\textquotedblright $$f:\{0,1\}\\
\rightarrow \{aa,ab,aB,aA,ba,bb,bB,bA,Ba,Bb,BB,Ba,Aa,Ab,AB,AA\}.$$
It maps $AA$,$BB$,$Aa$,$aA$,$Ab$,$AB$,$ab$,$aB$ to $0$ and $Bb$,$bB$,$%
BA$,$bA$,$Ba$,$ba$,$aa$,$bb$ to $1$. Note that, since $w\in \mathcal{B}%
_{2}^{11}\cap B_{2}^{01}$, $w$ is either a cyclic rotation of
$(AAbb)^{\pm 1} $ or $w$ is an anagram of $AaBb$. Words in the first
case are all trivial, because $a^{2}=b^{2}$ is a relation in $G$,
and these words are all balanced under the syllable function. Words
in the second case are nontrivial if and only if they are nontrivial
commutators. Commutators are constant under the syllable function.
Words in the second case which are trivial (\emph{%
i.e.}, not commutators) are all balanced under the syllable
function. Thus a third query of \textquotedblleft is $w$
syllable-balanced or syllable-constant\textquotedblright\ will
complete the solution of the word problem. The following table lists
all 0-syllabs and 1-syllabs:
\begin{equation*}
\begin{array}{l|l}
\text{0-syllabs} & \text{1-syllabs} \\ \hline
AA & aa \\
BB & bb \\
Aa & Bb \\
aA & bB \\
Ab & bA \\
AB & BA \\
ab & ba \\
aB & Ba%
\end{array}%
\end{equation*}
\end{proof}

\bigskip

While the group $G$ in the last proposition is nonabelian, it can be shown
to have a free abelian subgroup of rank $2$ and index $4$; it is an
extension of $\mathbb{Z}\oplus \mathbb{Z}$ by the Klein 4-group.

\begin{proposition}
Consider the group presented by $G=\langle a,b,c,d\mid a^{2}b^{2}=b^{2}a^{2}\rangle $.
Suppose we are given a word $w$ of length $8$ in $\mathcal{A}$ such that $%
w\in \mathcal{F}_{3}^{adfg}\cap \mathcal{F}_{3}^{abcd}$. Then the quantum query
complexity of the question \textquotedblleft Does $w$ represent the trivial
element of $G$?\textquotedblright\ is at most $3$.
\end{proposition}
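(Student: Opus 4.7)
The plan is to mimic the three-query algorithm from the proof of the previous proposition (the $\langle a,b\mid a^{2}=b^{2}\rangle$ case), scaling from the $4$-letter paired alphabet $\{a,b,A,B\}$ up to the $8$-letter paired alphabet $\{a,b,c,d,D,C,B,A\}$ and from the relation $a^{2}=b^{2}$ up to $a^{2}b^{2}=b^{2}a^{2}$. The first two queries dispose of the easy cases, and the third is a syllable function on length-$2$ syllables that resolves the remaining ambiguity.

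For Query~1 I ask whether $w\in\mathcal{C}_{3}^{abcd}$ or $w\in\mathcal{B}_{3}^{abcd}$. An $abcd$-constant answer means $w$ lies entirely in the lower-case block $\{a,b,c,d\}$ or entirely in the upper-case block $\{A,B,C,D\}$; either way $w$ has non-zero exponent sum in at least one generator and is non-trivial in $G$, so we halt. Otherwise Query~2 asks $\mathcal{C}_{3}^{adfg}$ versus $\mathcal{B}_{3}^{adfg}$; an $adfg$-constant answer combined with $abcd$-balance forces $w$ into one of a short list of letter multisets whose triviality in $G$ follows by direct inspection (exactly as in the abelian-group proposition generalised to the larger alphabet), and we halt.

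After two inconclusive queries we are in the case $w\in\mathcal{B}_{3}^{adfg}\cap\mathcal{B}_{3}^{abcd}$. I enumerate the admissible letter multisets under the two balance conditions and argue that they split into a \emph{relation-anagram} class of cyclic rotations of $(aabbAABB)^{\pm 1}$ (each a direct instance of $a^{2}b^{2}A^{2}B^{2}=1$, hence trivial in $G$) together with a \emph{mixed-letter} class in which letters from $\{c,d,C,D\}$ occur. Since $c$ and $d$ generate a free factor of $G$, a mixed-letter word is trivial in $G$ if and only if it already freely reduces to the identity; in particular any non-trivial mixed-letter word must contain an un-cancellable commutator pattern threaded through the free product decomposition $H*\langle c,d\rangle$, where $H=\langle a,b\mid a^{2}b^{2}=b^{2}a^{2}\rangle$. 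For Query~3 I define a syllable function $f\colon\mathcal{A}^{2}\to\{0,1\}$ on the $64$ length-$2$ syllables, sending the cancelling syllables $xX,Xx$ and those syllables appearing in cyclic factorisations of $(aabbAABB)^{\pm 1}$ to $0$ and the rest to $1$. With this assignment the relation-anagrams and the freely-trivial mixed-letter words are syllable-balanced while the non-trivial mixed-letter survivors are syllable-constant, so a single quantum query on $f$ decides the word problem.

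The main obstacle is the Query~3 case analysis. The $64$ length-$2$ syllables (four times as many as in the $\langle a,b\mid a^{2}=b^{2}\rangle$ case) and the much larger set of admissible distributions in $\mathcal{B}_{3}^{adfg}\cap\mathcal{B}_{3}^{abcd}$ mean I must check the syllable-balance/constancy claim against every admissible cyclic rotation and anagram by hand, being especially careful with the commutator patterns that span the free product $H*\langle c,d\rangle$. Once the syllable-function table is in place the argument mirrors that of the previous proposition verbatim.
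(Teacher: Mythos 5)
Your first two queries follow the paper, but your third query---the crux of the argument---contains a concrete error. You send every syllable occurring in a cyclic factorisation of $(aabbAABB)^{\pm 1}$ (together with the cancelling syllables $xX$, $Xx$) to $0$ and claim the relator rotations then come out syllable-\emph{balanced}. They do not: take $w=aabbAABB$ itself; its four length-$2$ syllables $aa,bb,AA,BB$ all occur in that factorisation, so all four map to $0$ and $w$ is syllable-\emph{constant}---precisely the value you reserve for the nontrivial survivors, so the algorithm misclassifies a trivial word. The paper's extended syllable function avoids this by being deliberately asymmetric between a letter and its inverse (it sends $AAAA, AABB, ABab,\dots$ to $0$ but $aaaa, aabb, bbaa,\dots$ to $1$) and by using length-$4$ syllables, two per word, so that $AABB\mid aabb$ evaluates to $(0,1)$, i.e.\ balanced, while the commutator-like anagrams of $AAaaBBbb$ evaluate constant. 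Any assignment that, like yours, is invariant under swapping each letter with its inverse cannot separate the relator rotations from the commutators, since both classes are closed under that swap up to anagram.

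The second gap is the mixed-letter class. You correctly observe that $\mathcal{B}_{3}^{adfg}\cap \mathcal{B}_{3}^{abcd}$ contains words involving $c,d,C,D$ (e.g.\ $ccddCCDD$), which the paper's case analysis silently omits, and you propose to treat them via the free product decomposition; but you never exhibit a single $\{0,1\}$-valued syllable function that simultaneously balances the relator rotations, balances the trivial anagrams of $AAaaBBbb$, is constant on the nontrivial ones, and also separates trivial from nontrivial mixed-letter words. You explicitly defer this verification as ``the main obstacle,'' yet it is the entire content of the third query, so the proof is not complete. A further unproved step is your Query 2 claim that $adfg$-constancy together with $abcd$-balance forces triviality: the word $aaaaCCCC$ lies in $\mathcal{C}_{3}^{adfg}\cap\mathcal{B}_{3}^{abcd}$ but represents $a^{4}c^{-4}\neq 1$ in $G$. (The paper's own appeal to ``the first two queries are as in the previous proposition'' suffers the same difficulty, but your write-up repeats rather than repairs it.)
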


\begin{proof}
The first two queries are as in the proof of the last proposition.
So we can assume that if we do not already know whether or not $w$
is trivial, $w\in \mathcal{B}_{3}^{adfg}\cap B_{3}^{abcd}$ and we
may proceed to the third query. For this, we construct an
extended syllable function whose output has a cardinality of
$2^{n-1}$. Some of the elements are listed below:
\[
\begin{tabular}{ll}
$f:\{0,1\}\longrightarrow $ & $\{aaaa,bbbb,BBBB,AAAA,aaab,aabb,abbb,$ \\
& $aaaB,aaBB,aBBB,aaaA,aaAA,aAAA,aaAA,$ \\
& $aAAA,bbbB,bbBB,bBBB,bbbA,bbAA,bAAA,$ \\
& $bbba,bbaa,baaa,BBBa,BBaa,Baaa,BBBb,$ \\
& $BBbb,Bbbb,BBBA,BBAA,BAAA,AAAa,AAaa,$ \\
& $Aaaa,AAAb,AAbb,Abbb,AAAB,AABB,ABBB,...\}$%
\end{tabular}%
\]
Examples of this map are
\begin{eqnarray*}
AAAA,BBBB,Abbb,AAaa,aBBB,aaBB,aaaB,abAB,ABab,ABab,AABB \dots
\text{to 0}
\end{eqnarray*}
and
\begin{eqnarray*}
aaaa,bbbb,Bbbb,BBbb,bBBB,bAAA,BBAA,BAAA,baaa,bAAA,aabb,bbaa \dots
\text{to 1}.
\end{eqnarray*}
Note that since $w\in \mathcal{B} _{3}^{adfg}\cap B_{3}^{abcd}$,
$w$ is either a cyclic rotation of $(AABBaabb)^{\pm 1} $ or $w$ is
an anagram of $AAaaBBbb$. Words in the first case are all trivial,
because $a^{2}b^{2}=b^{2}a^{2}$ is a relation in $G$, and these
words are all balanced under the syllable function. Words in the
second case are nontrivial if and only if are nontrivial
sequence of letters, that is not commutator-like sequence
with respect to the presentation.
Words in the second case which are trivial (\emph{%
i.e.}, not trivial sequence) are all balanced under the extended
syllable function. Thus a third query of \textquotedblleft is $w$
syllable-balanced or syllable-constant\textquotedblright\ will
complete the solution of the word problem.
\end{proof}

\bigskip

The same considerations can be made by looking at different sets of
generators or relations like $G=\langle a,b,c,d\mid
c^{2}d^{2}=d^{2}c^{2}\rangle $ and $G=\langle a,b,c,d\mid
b^{2}c^{2}=c^{2}b^{2}\rangle $. It is important to notice that all the alternate sets of relations five groups isomorphic to the group considered in Proposition 5. To see this, it is sufficient to relabel the generators. The relation in $G$ is in fact very general and it is possible to obtain the same result with a whole family of similar relations. This can be done by varying the parity function used for the queries, choosing the presentation accordingly. Moreover such a group is a free group of rank $2$ with $G=\langle a,b,c,d\mid
a^{2}b^{2}=b^{2}a^{2}\rangle $. It is simple to see that since the other two generators, $c$ and $d$, are not involved in the proof, it is possible to take the free product of $G$ with any free group and get to the same conclusion. In particular it is possible to extend the free product with \emph{any} group and see the invariance of those three quantum queries under free products.

Notice that the choice of some particular kind
of relations and an higher number of generators in the setting
of the problem may increase the number of queries required. The reason of this is the exponential growth in the number of permutations, in
particular, in those cases where splitting the words in parity
balanced and parity constant does not help. Generalize to other
different sets of generators and possibly for free products, and
limiting to commutator words might give interesting promises.

\section{Conclusions}

We have extended the original Deutsch-Josza algorithm to functions
of arbitrary length binary output, and we have introduced a
more general concept of parity. The setting described allows us to consider maps between binary strings and alphabet of various length. In the quantum regime, some instances of the word problem for small alphabets and free groups, can be solved in a reduced number of queries with respect to the deterministic classical case. Extensions to more general groups and presentations may give interesting promises. It is not clear that the success of procedures similar to the ones discussed here depends or not on the group considered. We have seen that the $X$-parity of a function, for some fixed set of binary strings $X$, can be determined with the Deutsch-Jozsa procedure, when $X$ consists of an appropriate subgroup (in our examples, a subgroup of index two). It has to be verified that the toy problems considered here can be re-interpreted as instances of the Abelian Hidden Subgroup Problem. In such a case, the problems could be solved with a slightly different technique, but with essentially the same number of oracle queries.

\bigskip

\emph{Acknowledgments} The authors would like to thank Andrew Childs for useful remarks. Part of this work as been done while Andrea Casaccino was attending \textquotedblleft The Seventh Canadian Summer School on Quantum Information\textquotedblright, hosted by the Perimeter Institute for Theoretical Physics and the Institute for Quantum Computing at the University of Waterloo.


\begin{thebibliography}{9}
\bibitem{l} R. C. Lyndon and P. E. Schupp, \emph{Combinatorial group theory},
reprint of the 1977 edition, Classics in Mathematics,
Springer-Verlag, Berlin, 2001.
\bibitem{2} M. A. Nielsen and I. L. Chuang,
\emph{Quantum computation and quantum information}, Cambridge University
Press, Cambridge, 2000.
\bibitem{3} D. Deutsch and R. Josza, Rapid Solution of Problems by
Quantum Computation, \emph{Proc. R. Soc. of London A}, \textbf{439}, 553-558
(1992).
\bibitem{4} E. Farhi, J. Goldstone, S. Gutmann, and M. Sipser, Limit on
the speed of quantum computation on determining parity, \emph{Phys.
Rev. Lett.}, \textbf{81}, 5552-5554 (1998).


\end{thebibliography}
\end{document}